\newtheorem{sentence}{Sentence}
\begin{document}
\title{Fast generalized Bruhat decomposition}
\titlerunning{Fast generalized Bruhat decomposition} 
\author{Gennadi Malaschonok
   \thanks{ Preprint of the paper: G. Malaschonok. Fast Generalized Bruhat Decomposition. Computer Algebra in Scientific Computing,
   LNCS 6244, Springer, Berlin, 2010. P. 194-202.
   (Supported by the Sci. Program Devel. Sci. Potent. High. School, RNP.2.1.1.1853.)}
}
\institute{Tambov State University, \\  Internatsionalnaya 33,
392622 Tambov, Russia}

\maketitle

\begin{abstract}
The deterministic recursive pivot-free algorithms for the computation of generalized Bruhat decomposition
of the matrix in the field and for the computation of the inverse matrix are
presented.
This method has the same complexity as algorithm of matrix multiplication and it is
suitable for the parallel computer systems.
\end{abstract}


\section{Introduction}

An $LU$ matrix decomposition without pivoting is a decomposition of the form $A=LU$, a
decomposition with partial pivoting has the form $PA=LU$,
and decomposition with full pivoting (Trefethen and Bau) has the form $PAQ=LU$,
where $L$ and $U$ are lower and upper triangular matrices, $P$ and $Q$ is a permutation
matrix.

French mathematician  Francois Georges Ren\'e Bruhat was the first
who worked with matrix decomposition in the form  $A=VwU$, where
$V$ and $U$ are nonsingular upper triangular matrices and $w$ is a
matrix of permutation. Bruhat decomposition plays  an important
role in algebraic group. The generalized Bruhat decomposition was
introduced and developed by D.Grigoriev\cite{10},\cite{11}. He uses the
Bruhat decomposition in the form  $A=VwU$, where $V$ and $U$ are
upper triangular matrices but they may be singular when the matrix
$A$ is singular. In the papers \cite{12} and \cite{13} there was
analyzed the sparsity pattern of triangular factors of the Bruhat
decomposition of a nonsingular matrix over a field.

Fast matrix multiplication and fast block matrix inversion were
discovered by Strassen \cite{1}. The complexity of Strassen's
recursive algorithm for block matrix inversion is the same as the
complexity of an algorithm for matrix multiplication. But in this
algorithm it is assumed that principal minors are invertible and
leading elements are nonzero as in the most of  direct algorithms
for matrix inversion. There are known other recursive methods for
adjoint and inverse matrix computation, which  have the complexity
of matrix multiplications(\cite{4}-\cite{6}).

In a general case it is necessary to find  suitable nonzero
elements and to perform  permutations of  matrix columns or rows.
Bunch and Hopkroft suggested such algorithm with full pivoting for
matrix inversion \cite{7}.

The permutation operation is not a very difficult operation  in
the case of sequential computations by one processor, but it is a
difficult operation in the case of parallel computations, when
different blocks of a matrix are disposed in different processors.
A matrix decomposition without permutations is needed for parallel
computation for construction of efficient and fast computational
schemes.

The problem of obtaining pivot-free algorithm was studied in \cite{2},\cite{3} by S.Watt.
 He presented the algorithm that is based on the following identity for a
nonsingular matrix:\ $A^{-1}=(A^{T}A)^{-1}A^{T}$. Here $A^T$ is
the transposed matrix to $A$ and all principal minors of the
matrix $A^TA$ are nonzero. This method is useful for making an
efficient parallel program with the help of Strassen's fast
decomposition of inverse matrix for dense nonsingular matrix over
the field of zero characteristic  when  field elements are
represented by the float numbers. 
Other parallel matrix algorithms are developed in \cite{14} - \cite{17}.

This paper is devoted to the construction  of the pivot-free
matrix decomposition method in a common case of singular matrices
over a field of arbitrary characteristic. The decomposition will
be constructed in the form $LAU=E$, where $L$ and $U$ are lower
and upper triangular matrices, $E$ is a truncated permutation
matrix, which has the same rank as the matrix $A$. Then the
generalized Bruhat decomposition may be easy obtained using the
matrix $L$, $E$ and $U$. This algorithm has the same complexity as
matrix multiplication and does not require pivoting. For singular
matrices it allows to obtain a nonsingular block of the biggest
size, the echelon form and kernel of matrix. The preliminary
variants of this algorithm were developed in \cite{8} and \cite{9}.

\section{Preliminaries}
We introduce some notations that will be used in the following sections.

Let $F$ be a field, $F^{n\times n}$ be an  $n\times n$ matrix ring over $F$, $S_n$
be a permutation group of $n$ elements.
Let $P_n$ be a multiplicative semigroup in $F^{n\times n}$ consisting of
matrices $A$ having exactly $\rm{rank} (A)$ nonzero entries, all of them equal to $1$.
We call $P_n$ the permutation semigroup  because it contains the permutation group
of n elements  $S_n$ and all their truncated matrix.

The semigroup $D_n \subset P_n$ is formed by the diagonal matrices. So
$|D_n|$=$2^n$ and the identity matrix ${\mathbf I}$ is the identity element in $D_n$, $S_n$ and $P_n$.

Let $W_{i,j}\in P_n$ be a matrix, which has only one nonzero
element in the position $(i,j)$. For an arbitrary matrix $E$ of
$P_n$, which has the rank $n-s$ ($s=0,..n$) we shall denote by
$i_{\overline E}=\{i_1,..,i_s  \}$ the ordered set of zero row
numbers and $i_{\overline E}=\{i_1,..,i_s  \}$ the ordered set of
zero column numbers.

\begin{definition}
Let $E\in P_n$ be the matrix of the rank $n-s$, let $i_{\overline
E}=\{i_1,..,i_s  \}$ and $i_{\overline E}=\{i_1,..,i_s  \}$ are
the ordered set of zero row numbers and zero columns number of the
matrix $E$. Let us denote by $\overline E$ the matrix
$$
{\overline E} = \sum_{k=1,..s} W_{i_k,j_k}
$$
and call it the  {\it complimentary} matrix for $E$. For the case $s=0$ we put ${\overline E}=0$.
\end{definition}
It is easy to see that $\forall E\in P_n:\  {E+\overline E}\in
S_n$, and $\forall I\in D_n:\ {I+\overline I}=\mathbf I$.
Therefore the map  $I\mapsto \overline{I}={\mathbf I}-I$ is the
involution and we have $I\overline{I}=0$.  We can define the
partial order at $D_n$: $I<J \Leftrightarrow J-I\in D_n$.
For each matrix $E \in P_n$ we shall denote by
$$
I_E=EE^T  \hbox{ and }  J_E=E^TE
$$
the diagonal matrix: $I_E, J_E \in D_n$.
The unit elements of the matrix $I_E$ show nonzero rows of the matrix $E$ and the
unit elements of the matrix
$J_E$ show nonzero columns of the matrix $E$.
Therefore we have several zero identities:
$$
E^T\overline{I}_E=\overline{I}_E E=E\overline{J}_E=\overline{J}_E E^T=0. \eqno(1)
$$

For any pare $I,J\in D_n$  let us denote the subset of matrices $F^{n\times n}$
$$F_{I,J}^{n\times n}=\{B: B\in F^{n\times n},  IBJ=B  \}.  $$
We call them $(I,J)$-zero matrix. It is evident that  $F^{n\times
n}= F_{{\mathbf I},{\mathbf I}}^{n\times n}$, $0\in
\cup_{I,J}F_{I,J}^{n\times n}$ and  if $I_2<I_1$ and $J_2<J_1$
then
 $F_{I_2,J_2}^{n\times n}\subset F_{I_1,J_1}^{n\times n}$.

\begin{definition}
We shall call the factorization of the matrix $A\in
F_{I,J}^{n\times n}$
$$
A=L^{-1}EU^{-1},   \eqno(2)
$$
 $LEU$-decomposition  if $E \in P_n$, $L$ is a
nonsingular low triangular matrix, $U$ is an upper unitriangular
matrices and
$$
 L-\overline I_E \in F_{I,I_E}^{n\times n},\ \ U-\overline J_E \in F_{J_E,J}^{n\times n}.
\eqno(3)
$$
\end{definition}
If (2) is the $LEU$-decomposition we shall write
$$
(L,E,U)= \mathcal{LU}(A),
$$
\begin{sentence} Let $ (L,E,U)= \mathcal{LU}(A)$ be the $LEU$-decomposition of
matrix $A\in F_{I,J}^{n\times n}$
 then
$$
L=\overline I_E +ILI_E,\ U=\overline J_E + J_EUJ,\  E\in F_{I,J}^{n\times n},   \eqno(4)
$$
$$
L^{-1}=\overline I_E +L^{-1}I_E,\ U^{-1}=\overline J_E + J_E U^{-1}.
$$
\end{sentence}
\begin{proof}
The first and second equalities  follows from (3). To prove the
property of matrix $E$ we use the commutativity of diagonal
semigroup $D_n$: \linebreak
$$
E=LAU=(\overline I_E +ILI_E)IAJ(\overline J_E + J_EUJ)=I(\overline I_E +LI_EI)A(\overline J_E +
JJ_EU)J.
$$
To prove the property of matrix $L^{-1}$ let us consider the identity
$$
{\mathbf I}= L^{-1}L= L^{-1}(\overline I_E +LI_E)=L^{-1}\overline I_E + {\mathbf I} I_E
$$
Therefore
$L^{-1}\overline I_E=  \overline I_E$ and $L^{-1}=L^{-1}(\overline I_E+I_E)=\overline I_E+L^{-1} I_E$.
The prove of the matrix $U^{-1}$ property may be obtained similarly.
\end{proof}
Sentence 1 states the  property of matrix $E$, which may be
written in the form $I_E<I$, $J_E<J$. We shall call it
 {\it the property of immersion}.
on the other hand, each zero row of the matrix $E$ goes to the
unit column of matrix $L$ and each zero column of the matrix $E$
goes to the unit row of matrix $U$.

Let us denote by ${\mathcal E_n}$ the permutation matrix $W_{1,n}+W_{2,n-1}+..+W_{n,1}\in S_n$.
It is easy to see that if the matrix $A\in F^{n\times n}$ is low-(upper-) triangular, then 
the matrix ${\mathcal E_n} A {\mathcal E_n}$ is upper- (low-) triangular.
\begin{sentence}
 Let $(L,E,U)=\mathcal{LU}(A)$ be the $LEU$-decomposition of matrix $A\in F^{n\times
 n}$,
 then the
matrix ${\mathcal E_n}A$ has the generalized Bruhat decomposition
$V_1wV_2$ and
$$
V_1={\mathcal E_n}(L^{-1}-\overline I_E){\mathcal E_n},\  w={\mathcal E_n}(E+{\overline E}),\
 V_2=(U^{-1}-\overline J_E).
$$
\end{sentence}
\begin{proof}
As far as $L^{-1}$ is a low triangular matrix and $U^{-1}$ is an
upper triangular matrix we see that $V_1$ and $V_2$ are  upper
triangular matrices. Matrix $w$ is a product of  permutation
matrices so $w$ is a permutation matrix. One easily checks that
$V_1wV_2={\mathcal E_n}L^{-1}EU^{-1}={\mathcal E_n}A$.

\end{proof}

Examples.

For any matrix $I\in D_n$,  $E\in P_n$, $0 \neq a \in F$ the product $(aI+\overline I)
 I\ {\mathbf I}$ is a LEU decompositions of matrix $aI$ and the product
 $(aI_E+\overline I_E)  E\ {\mathbf I}$ is a  LEU decompositions of matrix  $aE$.

{\section{Algorithm of $LEU$ decomposition}}

\begin{theorem}
For any matrix $A\in F^{n\times n}$ of size  $n=2^k,\ k\geq 0$ a
$LEU$-decomposition exists. For computing such decomposition it is
enough to compute 4 $LEU$-decompositions, 17
 multiplications and several permutations for the matrices of size  $n=2^{k-1}$.
\end{theorem}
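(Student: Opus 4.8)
The plan is to prove this by a recursive block decomposition, splitting the $n \times n$ matrix $A$ into four $(n/2) \times (n/2)$ blocks and processing them in a fixed order that never requires pivoting. Write
\[
A = \begin{pmatrix} A_{11} & A_{12} \\ A_{21} & A_{22} \end{pmatrix}.
\]
First I would recursively compute an $LEU$-decomposition of the top-left block, say $(L_1, E_1, U_1) = \mathcal{LU}(A_{11})$, so that $L_1 A_{11} U_1 = E_1$. Applying $L_1$ on the left and $U_1$ on the right to the whole matrix clears $A_{11}$ to $E_1$ and transforms $A_{12}, A_{21}, A_{22}$ into new blocks $B_{12} = L_1 A_{12}$, $B_{21} = A_{21} U_1$, $B_{22} = A_{22}$. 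The key subtlety, and the reason for the immersion property (Sentence 1), is that $E_1$ has rank $n/2 - s$ for some $s$, so there are $s$ ``free'' rows and columns left inside the first block; the remaining work must fold the contributions of $B_{12}$ and $B_{21}$ into those free positions before touching $B_{22}$. Concretely I would use the diagonal projectors $I_{E_1}, J_{E_1}$ and their complements $\overline{I}_{E_1}, \overline{J}_{E_1}$ to split each of $B_{12}, B_{21}$ into a part that lands in the nonzero rows/columns of $E_1$ (which can be eliminated against $E_1$ by a further triangular multiplication, exactly as in the classical Schur-complement step) and a residual part supported on the zero rows/columns of $E_1$, which survives into the next stage.

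After that elimination one is left with a Schur-type complement matrix of the form $C = \overline{J}_{E_1} B_{22} \overline{I}_{E_1}$-shaped correction plus the residual pieces, all of which can be assembled into a single $(n/2)\times(n/2)$-type matrix whose $LEU$-decomposition is computed recursively — this is the second recursive call. In fact the structure forces three more recursive calls rather than one: one to handle the residual of $B_{21}$ against the free columns of $E_1$, one for the residual of $B_{12}$ against the free rows, and one for the genuine bottom-right Schur complement; together with the initial call this gives the 4 $LEU$-decompositions claimed. Each of these produces its own triangular factors, and the final $L$ and $U$ are obtained by composing the triangular factors from all four stages together with the interleaving permutations that reorder the free rows/columns of $E_1$ so they sit adjacent to the bottom block. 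The resulting $E$ is the (suitably permuted) direct sum of the $E_i$'s, which again lies in $P_n$.

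The arithmetic bookkeeping is then a matter of counting: each ``apply $L_i$ / apply $U_i$ to an off-diagonal block'' is a half-size matrix multiplication, each Schur correction is another, and forming the composite triangular factors $L$, $U$ at the end requires multiplying the accumulated half-size triangular matrices together. Carefully listing these — the updates $L_1 A_{12}$, $A_{21} U_1$, the two eliminations against $E_1$, the formation of the corrected bottom block, the analogous updates after the second, third and fourth recursive calls, and the final products assembling $L$ and $U$ — comes out to $17$ multiplications of size $n/2$, with everything else being additions and permutations. I would verify the identity $LAU = E$ stage by stage, using repeatedly the zero identities $(1)$ and the commutativity of $D_n$ exactly as in the proof of Sentence 1, to confirm that at each stage the already-processed blocks stay in the correct $(I,J)$-zero subspace and that the triangular shape of the accumulated $L$ and $U$ is preserved under the permutations.

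The main obstacle I expect is not the complexity count but getting the four-way split exactly right when $E_1$ is singular: one must track which rows and columns are ``used up'' by $E_1$ versus still free, ensure the residual blocks $\overline{I}_{E_1} B_{12}$, $B_{21} \overline{J}_{E_1}$ are fed into the recursion on the correct side, and choose the permutations so that the final $L$ is genuinely lower-triangular and $U$ upper-unitriangular while the immersion property $I_E < I$, $J_E < J$ holds for the output. Checking that the permutations needed are only ``several'' (independent of depth in a way that does not blow up the constant) and that they commute appropriately with the triangular factors is the delicate point; once the singular case is handled, the nonsingular case is the classical Strassen block inversion recursion as a special case.
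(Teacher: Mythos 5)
Your recursion is essentially the paper's: decompose $A_{11}$ first, use its factors to form $Q=L_{11}A_{12}$ and $B=A_{21}U_{11}$, split these by the projectors $\overline{I}_{11},\overline{J}_{11}$ into residuals that feed the second and third recursive calls, form the Schur-type complement $A_{22}-BE_{11}^{T}Q$ (with $E_{11}^{T}$ playing the role of a partial inverse of $E_{11}$), project it by the complements coming from the second and third calls, and make the fourth call on the result. That is exactly the paper's sequence (7)--(12), and your identification of the immersion property as the reason the residuals land in ``free'' rows and columns is the right key.

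Two points would derail you if carried out as written. First, no reordering permutations are needed, and introducing them is a wrong turn: the paper's $E$ is literally the block matrix with blocks $E_{11},E_{12},E_{21},E_{22}$ in their original positions, \emph{not} a permuted direct sum. It lies in $P_{2n}$ automatically because the immersion property applied to $A^{1}_{12}=\overline{I}_{11}Q$, $A^{1}_{21}=B\overline{J}_{11}$, $A^{2}_{22}=\overline{I}_{21}G\overline{J}_{12}$ forces the four blocks onto pairwise disjoint row and column sets (the relations (16)--(19)). If you instead interleave permutations to move the free rows and columns of $E_{11}$ next to the bottom block, you either destroy the lower-triangularity of $L$ (and unitriangularity of $U$) or reintroduce exactly the pivoting the construction is designed to avoid; the ``several permutations'' of the theorem refer only to multiplications by the sparse matrices $E_{ij}^{T}$, $\overline{I}_{ij}$, $\overline{J}_{ij}$, which cost no arithmetic. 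Second, your assembly of $L$ and $U$ is where the real content is and it is left unspecified: the factors do not simply compose, because the off-diagonal blocks of
$L=\left(\begin{array}{cc}L_{12}L_{11}&0\\-L_{22}WL_{11}&L_{22}L_{21}\end{array}\right)$ and
$U=\left(\begin{array}{cc}U_{11}U_{21}&-U_{11}VU_{22}\\0&U_{12}U_{22}\end{array}\right)$
require the specific coupling terms $W=GE_{12}^{T}L_{12}+L_{21}BE_{11}^{T}$ and $V=U_{21}E_{21}^{T}G\overline{J}_{12}+E_{11}^{T}QU_{12}$, and verifying the four block identities of $LAU=E$ together with the membership conditions $L-\overline{I}_{E}\in F_{I,I_{E}}^{n\times n}$, $U-\overline{J}_{E}\in F_{J_{E},J}^{n\times n}$ hinges entirely on these. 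Without exhibiting $W$ and $V$ the claim of $17$ multiplications cannot be checked either, since several of those products occur precisely in forming and applying them.
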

\begin{proof}
For the matrix of size $1\times 1$, when $k=0$,  we can write the
following $LEU$ decompositions
$$
\mathcal{LU}(0)=(1,0,1) \hbox { and }\ {\mathcal{LU}}(a)=(a^{-1},1,1), \hbox{ if }\ a\neq 0.
$$
Let us assume that for any matrix of size $n$ we can write a $LEU$
decomposition and let us given matrix $A\in F^{2n\times 2n}_{I,J}$
has the size $2n$. We shall construct a $LEU$ decomposition of
matrix $A$.

First of all we shall divide the matrices $A$, $I$, $J$ and a
desired matrix $E$ into four equal blocks:
$$
A =
\left[\begin{array}{cc} A_{11} & A_{12} \\
A_{21} & A_{22} \end{array}\right],
I={\rm diag} (I_1,I_2), J={\rm diag} (J_1,J_2), E=\left[\begin{array}{cc} E_{11} & E_{12} \\
E_{21} & E_{22} \end{array}\right],
                                      \eqno(5)
$$
and denote
 $$I_{ij}=E_{ij} E_{ij}^T, \ \  J_{ij}=E_{ij}^T E_{ij}\ \  \forall i,j \in \{1,2\}.   \eqno(6)$$

Let
$$(L_{11}, E_{11}, U_{11})={\mathcal{LU}}(A_{11}), \eqno (7)$$
denote the matrices
$$
  Q =L_{11}A_{12},\ B = A_{21}U_{11}, \eqno(8)
$$
$$
A^{1}_{21}=  B \overline {J}_{11},\ A^{1}_{12}=\overline{I}_{11} Q,\ A^{1}_{22}=A_{22}-  B E^T_{11} Q.  \eqno(9)
$$
Let
$$(L_{12},E_{12},U_{12})={\mathcal{LU}}(A^1_{12}) \hbox{ and } (L_{21},E_{21},U_{21})={\mathcal{LU}}(A^1_{21}),  \eqno (10) $$
denote the matrices
$$
  G=L_{21}A^{1}_{22}U_{12},\
A^{2}_{22}=\overline{I}_{21} G\overline{J}_{12}.                             \eqno(11)
$$
Let us put
$$(L_{22},E_{22},U_{22})={\mathcal{LU}}(A^2_{22}),          \eqno (12) $$
and denote
$$
W=( G E^T_{12}L_{12} + L_{21}  B E_{11}^T), \
V=(U_{21}E_{21}^T G \overline{J}_{12} + E_{11}^T Q U_{12}),                              \eqno(13)
$$
$$ L = \left(\begin{array}{cc} L_{12}L_{11} & 0 \\
-L_{22}W L_{11} & L_{22}L_{21} \\
\end{array}\right),
 U = \left(\begin{array}{cc} U_{11}U_{21} &
 -U_{11}V U_{22} \\
0 & U_{12}U_{22} \\ \end{array}\right).                                \eqno(14)
$$
We have to prove that
$$
(L,E,U)={\mathcal{LU}}(A).   \eqno (15)
$$
As far as  $L_{11}, L_{12}, L_{21},L_{22}$ are  low triangular
nonsingular matrices and $U_{11}, U_{12}$, $U_{21}$, $U_{22}$ are upper
unitriangular matrices we can see in (10) that the matrix $L$ is a
low triangular nonsingular matrix and the matrix $U$ is  upper
unitriangular.

Let us show that $E\in P_{2n}$.
As far as $E_{11}, E_{12}, E_{21}, E_{22}\in P_n $ and $A_{11}={I}_{11}A_{11}{J}_{11}$,
$A^{1}_{21}=  B\overline {J}_{11}$, $A^{1}_{12}=\overline{I}_{11} Q$,
$A^{2}_{22}=\overline{I}_{21} G\overline{J}_{12}$ and due to the Sentence 1 we obtain $E_{11}={I}_{11}
 E_{11}{J}_{11}$,
$E_{21}=  E_{21} \overline {J}_{11}$, $E_{12}=\overline{I}_{11} E_{12}$,
$E_{22}=\overline{I}_{21} E_{22}\overline{J}_{12}$.

Therefore  the unit elements in  each of the four blocks of the
matrix $E$ are disposed in different rows and columns of the
matrix $E$. So $E\in P_{2n}$, and next identities hold

$$E_{11}E^T_{21}= E_{11}J_{21}= J_{11}E^T_{21}= J_{11}J_{21}= 0,   \eqno(16) $$
$$E^T_{12}E_{11}=E^T_{12}I_{11}=I_{12}E_{11}=I_{12}I_{11}=0, \eqno(17) $$
$$E_{12}E^T_{22}=E_{12}J_{22}=J_{12}E^T_{22}=J_{12}J_{22}=0, \eqno(18) $$
$$E^T_{22}E_{21}=E^T_{22}I_{21}=I_{22}E_{21}=I_{22}I_{21}=0. \eqno(19) $$



We have to prove, that $E=LAU$. This equation in block form consists of four block equalities:
$$
\begin{array}{l}
 E_{11}=L_{12}L_{11}A_{11}U_{11}U_{21} ; \\
 E_{12}=L_{12}L_{11}(A_{12}U_{12} - A_{11}U_{11}V)U_{22} ; \\
E_{21}=L_{22}(L_{21}A_{21} - WL_{11}A_{11})U_{11}U_{21} ; \\
E_{22}=L_{22}((L_{21}A_{22}- WL_{11}A_{12})U_{12}-( L_{21}A_{21}- WL_{11}A_{11})U_{11}V)U_{22}.
\end{array}
\eqno(20)
$$
Therefore we have to prove these block equalities.

Let us note, that from the identity
$A_{11}=I_{1}A_{11}J_{1}$ 
and Sentence 1 we get
$$
L_{11}={\overline I}_{11}+I_{1}L_{11} I_{11},\ U_{11}={\overline J}_{11}+J_{11}U_{12} J_{1}. \eqno(21)
$$

The Sentence 1 together with equations
$A^{1}_{12}=\overline{I}_{11} L_{11}A_{12}$, $A^{1}_{21}=
A_{21}U_{11} \overline {J}_{11}$, $A^{2}_{22}=\overline{I}_{21}
L_{21}(A_{22}-  A_{21}U_{11} E^T_{11}
L_{11}A_{12})U_{12}\overline{J}_{12}$ give the next properties of
L- and U- blocks:
$$
\begin{array}{l}
L_{12}={\overline I}_{12}+{\overline I}_{11}I_{1} L_{12} I_{12}, \
U_{12}={\overline J}_{12}+J_{12}U_{12} J_{2},
\\
L_{21}={\overline I}_{21}+I_{2} L_{21} I_{21}, \ \ \ \
U_{21}={\overline J}_{21}+J_{21}U_{12} J_{1} {\overline J}_{11},
\\
L_{22}={\overline I}_{22}+{\overline I}_{21} I_2 L_{22} I_{22}, \  \
U_{22}={\overline J}_{22}+J_{22}U_{22}J_2 {\overline J}_{12}.
\end{array} \eqno(22)
$$
The following identities can be easy checked now
$$L_{12}E_{11}=E_{11},\ L_{12}I_{11}=I_{11}, \eqno (23)$$
$$E_{11}U_{21}=E_{11},\ J_{11}U_{21}=J_{11}, \eqno (24) $$
$$E_{12}U_{22}=E_{12},\ J_{12}U_{22}=J_{12}, \eqno (25)$$
$$L_{22}E_{21}=E_{21},\  L_{22}I_{21}=I_{21}.  \eqno (26)$$
We shall use the following equalities,
$$ L_{11}A_{11}U_{11}=E_{11}, L_{12}A_{12}^1U_{12}=E_{12}, L_{21}A_{21}^1U_{21}=E_{21}, L_{22}A_{22}^2U_{22}=E_{22}, \eqno (27)$$
which follows from (7),(10) and (12), the equality
$$
E_{11}V=I_{11}QU_{12},   \eqno (28)
$$
which follows from the definition of  the block $V$ in (13), (24),
(16) and (6), the equality
$$
WE_{11} =L_{21}BJ_{11}, \eqno (29)
$$
which follows from the definition of the block  $W$ in (13), (23),
(17) and (6).

1. The first equality of (20) follows from (27), (23) and (24).

2.The right   part of the second equality of (20) takes the form
$L_{12}({\mathbf I}-I_{11})Q U_{12}U_{22}$  due to (8), (27) and
(28). To prove the second equality we use the definition of the
blocks $B$ and $A_{12}^1$ in (8) and (9), then the second equality
in (27) and identity (25): $L_{12}({\mathbf I}-I_{11})Q
U_{12}U_{22}=L_{12}A_{12}^1 U_{12}U_{22}=E_{12}U_{22}=E_{12}.$

3.  The right  part of the third equality of (20) takes the form
$L_{22}L_{21}B({\mathbf I}-J_{11}) U_{21}$ due to definition of
the block $B$ (8), the first equality in (27) and (29). To prove
the third equality we use the definition of the blocks $A_{21}^1$
in  (9), then the third equality in (27) and identity (26):
$L_{22}L_{21}B{\overline J}_{11} U_{21}=L_{22}L_{21}A_{21}^1
U_{21}=L_{22}E_{21}=E_{21}$.

4. The identity
$$
E^T_{12}L_{12}=
E^T_{12}L_{12}(I_{11}+{\overline I}_{11})=E^T_{12}L_{12}{\overline I}_{11} \eqno (30)
$$
follows from (23) and (17).


We have to check that
$(L_{21}A_{22}- WL_{11}A_{12})U_{12}=(L_{21}A_{22}- ( G E^T_{12}L_{12} + L_{21}  B E_{11}^T)Q)U_{12}=
L_{21}(A_{22}- B E_{11}^TQ)U_{12}-G E^T_{12}L_{12}QU_{12}=
L_{21}A^1_{22}U_{12}-G E^T_{12}L_{12}{\overline I}_{11}QU_{12}=
G-G E^T_{12}L_{12}A_{12}^1U_{12}=G-GE^T_{12}E_{12}=G{\overline J}_{12}$,
using the definitions of the blocks $W$ in (13), $A_{22}^1$ and $A_{12}^1$ in (9), the identity (28),
the second equality in (27) and the definition (6).

We have to check that $-( L_{21}A_{21}- WL_{11}A_{11})U_{11}V=-(
L_{21}A_{21}U_{11}- WE_{11})V$ $=$ $(-L_{21}B +
L_{21}BJ_{11})V=-L_{21}B{\overline J}_{11}V= -L_{21}B{\overline
J}_{11}(U_{21}E_{21}^T G \overline{J}_{12} + E_{11}^T Q U_{12})=
-L_{21}A^1_{21}U_{21}E_{21}^T G \overline{J}_{12}=-I_{21} G
\overline{J}_{12}$, using the first equality in (27), the identity
(29), the definitions of the blocks $V$ in (13), (1), then the
third equality in (27) and definition (6).

To prove the forth equality we have to substitute obtained
expressions to the right  part of the fourth equality:
$$
L_{22}(G{\overline J}_{12}-I_{21} G \overline{J}_{12})U_{22}=L_{22}{\overline I}_{21} G \overline{J}_{12}U_{22}=L_{22}A^2_{22}U_{22}=E_{22}.
$$

For the completion of the proving of this theorem we have to
demonstrate the special form of the matrices $U$ and $L$:
$L-\overline I_E \in F_{I,I_E}$ and $U-\overline J_E \in
F_{J_E,J}$.

The matrix $L$ is invertible and $I_E<I$ therefore we have to prove
that $L=\overline I_E +ILI_E$, where
$I_E={\rm diag}(I_{11}+I_{12}, I_{21}+I_{22})$,
$\overline I_E={\rm diag} (\overline I_{11} \overline I_{12}, \overline I_{21} \overline I_{22})$,
$I={\rm diag}(I_{1}, I_{2})$.

This matrix equality for matrix $L$ (14) is equivalent to the four block equalities:
\par\noindent
 $$ L_{12}L_{11} = I_1L_{12}L_{11}(I_{11}+I_{12}) +  \overline I_{11}\overline I_{12}, \  0=I_1 0 (I_{21}+I_{22}), $$
$$-L_{22} W L_{11} =  -I_2L_{22} W L_{11}(I_{11}+I_{12}), \
L_{22}L_{21} = I_2 L_{22}L_{21}(I_{21}+I_{22}) + \overline I_{21}\overline I_{22}.
$$
To prove the first block equalities we have to multiply its left
part by the unit matrix in the form ${\mathbf I}=(I_1+{\overline
I}_1)$ from the left side and by the unit matrix in the form
${\mathbf I}=(I_{11}+I_{12})+\overline I_{11}\overline I_{12}$
from the left side. Then we use the following identities to obtain
in the left part the same expression as in the right part: $
L_{11}\overline I_{11}=\overline I_{11}$, $ L_{12}\overline
I_{12}=\overline I_{12}$, $ \overline I_{1}L_{12}L_{11}=\overline
I_{1}$, $ \overline I_{1}(I_{11}+I_{12})=0$.   The same idea may
be used for proving the last block equality, but we must use other
forms of unit matrix: ${\mathbf I}=(I_2+{\overline I}_2)$,
${\mathbf I}=(I_{21}+I_{22})+\overline I_{21}\overline I_{22}$.

The  second block equality is evident.

Let us prove the third block equality.
We have to multiply the left part of the third block equality by the unit matrix in the form
${\mathbf I}=(I_2+{\overline I}_2)$ from the left side and by the unit matrix in the form
${\mathbf I}=(I_{11}+I_{12})+\overline I_{11}\overline I_{12}$ from the right side.

The block $W$ is equal to the following expression by the
definition (13), (11) and (8):
$$
W=( L_{21}(A_{22}-  A_{21}U_{11} E^T_{11} Q)U_{12} E^T_{12}L_{12} + L_{21} A_{21}U_{11}  E_{11}^T).
$$
We have to use in the left part the equations
 ${\overline I}_2 L_{22} = {\overline I}_2$,
${\overline I}_2 L_{21} = {\overline I}_2$,
${\overline I}_2 A_{22} = 0$, ${\overline I}_2 A_{21} = 0$,
and
$L_{11} {\overline I}_{11}={\overline I}_{11}$,
$L_{12} {\overline I}_{12}={\overline I}_{12}$,
$E^T_{12}{\overline I}_{12}=0$, $E^T_{11}{\overline I}_{11}=0$.

The property of the matrix  $U$:  $U-\overline J_E \in F_{J_E,J}$
may be proved in the same way as the property of the matrix $L$.

\end{proof}


\begin{theorem}
For any matrix $A$ of size $s, (s\geq 1)$,  an algorithm of
$LEU$-decompo\-sition which has the same complexity as matrix
multiplication exists.
\end{theorem}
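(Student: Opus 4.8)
The plan is to bootstrap Theorem~1 in two stages: first handle sizes that are powers of two by unrolling its recursion, then reduce an arbitrary size $s$ to the next power of two by padding with an identity block. For $n=2^{k}$ the algorithm is precisely the recursive scheme of Theorem~1, whose base case $k=0$ is already supplied there ($\mathcal{LU}(0)=(1,0,1)$ and $\mathcal{LU}(a)=(a^{-1},1,1)$ for $a\neq 0$). Writing $T(n)$ for its cost on $n\times n$ matrices and $M(n)=\Theta(n^{\omega})$, $2<\omega\le 3$, for the cost of $n\times n$ matrix multiplication, Theorem~1 yields
$$T(n)\ \le\ 4\,T(n/2)\ +\ 17\,M(n/2)\ +\ c\,n^{2},$$
the quadratic term absorbing the block additions and the rearrangements by matrices of $P_{n}$. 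Unrolling gives $T(n)=\sum_{i\ge 0}4^{i}\bigl(17\,M(n/2^{i+1})+c\,(n/2^{i})^{2}\bigr)+4^{k}T(1)$; since $4<2^{\omega}$ the series $\sum_{i}4^{i}M(n/2^{i+1})$ is geometric and bounded by a constant multiple of its first term, hence $O(M(n))$, while $\sum_{i}4^{i}(n/2^{i})^{2}=O(n^{2}\log n)=o(n^{\omega})$ and $4^{k}T(1)=O(n^{2})$. Therefore $T(n)=\Theta(M(n))$ for powers of two; equivalently, the master theorem applied to $4T(n/2)+\Theta(n^{\omega})$ with $\omega>2$ gives the same conclusion.

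For general $s\ge 1$, take $n=2^{\lceil\log_{2}s\rceil}$, so that $s\le n<2s$, and run the power-of-two algorithm on $\widehat A={\rm diag}(\mathbf{I}_{n-s},A)$. I claim that every $LEU$-decomposition $(\widehat L,\widehat E,\widehat U)=\mathcal{LU}(\widehat A)$ has vanishing off-diagonal blocks in the $(n-s)+s$ block partition and that its trailing $s\times s$ blocks form an $LEU$-decomposition of $A$; this both establishes existence of $\mathcal{LU}(A)$ and tells us how to read it off. Indeed, in block form the $(1,1)$ entry of $\widehat L\widehat A\widehat U=\widehat E$ is $\widehat L_{11}\widehat U_{11}$, which is invertible (a product of a nonsingular lower triangular and an upper unitriangular matrix) and lies in $P_{n-s}$, hence is a permutation matrix; a permutation matrix that factors as a product of a lower triangular and an upper triangular matrix has all leading principal minors nonzero and so equals $\mathbf{I}_{n-s}$. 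This forces $\widehat L_{11}=\widehat U_{11}=\mathbf{I}$; the $P_{n}$-property of $\widehat E$ then makes the blocks $\widehat E_{12}$ and $\widehat E_{21}$ vanish, whence $\widehat U_{12}=0$ and $\widehat L_{21}=0$ as well. What is left is $\widehat E_{22}=\widehat L_{22}A\widehat U_{22}$ with $\widehat L_{22}$ lower triangular nonsingular and $\widehat U_{22}$ upper unitriangular, and the immersion conditions~(3) for $(\widehat L_{22},\widehat E_{22},\widehat U_{22})$ follow by restricting the Sentence~1 identities for $\widehat A$ to the trailing block. Since $n<2s$, the total cost is $T(n)+O(s^{2})=O(M(2s))+O(s^{2})=O(M(s))$.

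The only step that genuinely needs care is this recovery: a priori the algorithm of Theorem~1 could mix the identity padding into a non-block-diagonal output, and it is exactly the permutation-matrix / leading-minor argument above that rules this out. Everything else is the routine unrolling of the Theorem~1 recurrence, together with the standard fact that a $4T(n/2)+\Theta(n^{\omega})$ recursion has solution $\Theta(n^{\omega})$ whenever $\omega>2$, i.e. the algorithm attains the complexity of matrix multiplication.
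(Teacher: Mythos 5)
Your proposal is correct, and its complexity half (unrolling $t(n)=4t(n/2)+17M(n/2)+O(n^2)$ and invoking $\omega>2$, then $n<2s$) matches the paper's; but your reduction from arbitrary $s$ to a power of two takes a genuinely different route. The paper embeds $A$ as the \emph{top-left} block of $A'$ and pads with \emph{zeros}, so that $A'\in F^{2^k\times 2^k}_{I,J}$ with $I=J=\mathrm{diag}(\mathbf{I}_s,0)$; the block-diagonal shape of $(L',E',U')$ is then immediate from Sentence~1, i.e.\ from the immersion conditions (3) that the whole $F_{I,J}$ formalism was designed to propagate through the recursion (this is precisely why Theorem~1 is stated for $A\in F^{2n\times 2n}_{I,J}$ with general $I,J$ rather than only for $I=J=\mathbf{I}$). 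You instead pad with an identity block and must then prove, after the fact, that the algorithm does not mix the padding into $A$; your argument --- $\widehat E_{11}=\widehat L_{11}\widehat U_{11}$ is an invertible element of $P_{n-s}$, hence a permutation matrix admitting an $LU$ factorization, hence equal to $\mathbf{I}$, which forces $\widehat E_{12},\widehat E_{21},\widehat L_{21},\widehat U_{12}$ to vanish --- is valid, elementary, and in fact applies to \emph{any} $LEU$-decomposition of $\widehat A$, not just the one produced by the recursion. What the paper's choice buys is that no such extra argument is needed, since zero padding is absorbed by machinery already proved; what your choice buys is independence from that machinery together with a small uniqueness-flavored statement about how every decomposition of the padded matrix must look. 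Both yield existence for all $s\ge 1$ and cost $O(M(s))$.
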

\begin{proof}
We have proved an existence of $LEU$-decomposition for matrices of
size $2^k, k>0.$ Let $A\in F^{s\times s}_{I,J}$ be a matrix of
size $2^{k-1}<s< 2^k$, $A'$ be a matrix of size $2^k$, which has
in the left upper corner the submatrix equal  $A$ and all other
elements equal zero. We can construct $LEU$-decomposition of
matrix $A'$: $(L', E', U')={\mathcal{LU}}(A')$. According to the
Sentence 1 the product $L' A' U'=E'$ has the form
$$ \left(\begin{array}{cccc}
L &  0 \\
0 & {\mathbf I}
 \end{array}\right)
\left(\begin{array}{cccc}
A &  0 \\
0 & 0
 \end{array}\right)
 \left(\begin{array}{cccc}
U &  0 \\
0 &  {\mathbf I}
 \end{array}\right)=
 \left(\begin{array}{cccc}
E &  0 \\
0 &  0
 \end{array}\right)
$$
Therefore $LAU=E$ is a $LEU$ decomposition of matrix $A$.

The total amount of matrix multiplications in (7)-(15) is equal to
17 and total amount of recursive calls is equal to 4. We do not
consider multiplications of the permutation matrices, we can do these multiplications due to permutation of pointers for the blocks which are disposed at the
lockal processors.

We can compute the decomposition  of the second order matrix  by
means of 5 multiplicative operations. Therefore  we get the
following recurrent equality for complexity
$$
t(n)=4 t(n/2)+ 17 M(n/2), t(2)=5.
$$
Let $\gamma$ and $\beta$ be constants, $3\geq\beta>2$, and let
  $M(n)= \gamma n^{\beta} + o(n^{\beta})$ be the number of multiplication
  operations in one $n\times n$ matrix multiplication.

After summation from  $n=2^k$ to $2^1$ we obtain
$$
17  \gamma(4^0 2^{\beta(k-1)} + \ldots + 4^{k-2}2^{\beta 1})+4^{k-2}
5 = 17 \gamma\frac{n^{\beta}-2^{\beta-2}n^2}{2^{\beta}-4} +
\frac{5}{16}n^2.
$$
Therefore the complexity of the decomposition is
$$
\sim
\frac{ 17 \gamma n^{\beta}}{2^{\beta}-4}
$$
\end{proof}

If $A$ is an invertible matrix, then $A^{-1}=UE^TL$ and a
recursive block algorithm of matrix inversion is written in the
expressions (7)-(15). This algorithm has the complexity of matrix
multiplications.

%

\section{Conclusion}

An algorithms for finding the generalized Bruhat decomposition and
matrix inversion are described. These algorithms have the same
complexity as matrix multiplication and do not require pivoting.
For singular matrices they allow to obtain a nonsingular block of
the biggest size. These algorithms may be used in any field,
including  real and complex numbers, finite fields and their
extensions.

The proposed algorithms are pivot-free, and do not change the matrix block
structure. So they are suitable for parallel hardware implementation.



\begin{thebibliography}{99}
%
\bibitem{10}
Grigoriev D. Analogy of Bruhat decomposition for the closure of a cone of Chevalley group of a classical serie. 
Soviet Math. Dokl., vol.23, N 2, 393--397 (1981)
\bibitem{11}
Grigoriev D.
Additive complexity in directed computations. Theoretical Computer Science,  vol.19, 39--67 (1982)
%
\bibitem{12}
Kolotilina L.Yu. Sparsity of Bruhat decomposition factors of
nonsingular matrices. Notes of Scientific Seminars of LOMI,  v.202,  5--17 (1992)
%
\bibitem{13}
Kolotilina L.Yiu and Yeгеmin A.Yu.
Bruhat decomposition and solution of linear algebraic systems
with sparse matrices. Sov.J.Numer.Anal. and Math.Model. 
v.2, 421--436 (1987)
%
\bibitem{1}
Strassen V.: Gaussian Elimination is not optimal. Numerische Mathematik.
{\textbf 13},  354--356 (1969)
\bibitem{4}
Malaschonok G.I.: Effective Matrix Methods in Commutative Domains. In: Formal
Power Series and Algebraic Combinatorics. Springer, Berlin  506--517 (2000)
\bibitem{5}
Malaschonok G.I.: Matrix computational methods in commutative rings. Tambov,
Tambov State University (2002).
\bibitem{6}
Akritas A. and Malaschonok G.: Computation of Adjoint Matrix. In: Fourth
International Workshop on Computer Algebra Systems and Applications (CASA 2006),
LNCS 3992. Springer, Berlin,  486-489 (2006)
%
\bibitem{7}
 Bunch J., Hopkroft J. Triangular factorization and inversion by fast matrix
multiplication. Mat. Comp. V. 28,  231-236 (1974)
\bibitem{2}
Watt S.M. Pivot-Free Block Matrix Inversion. Maple Conference 2006, July 23-26,
Waterloo, Canada. (2006)
http://www.csd.uwo.ca/~watt/pub/reprints/2006-mc-bminv-poster.pdf
\bibitem{3}
Watt S.M. Pivot-Free Block Matrix Inversion, Proc 8th International Symposium on
Symbolic and Numeric Algorithms in Symbolic Computation (SYNASC), IEEE Computer
Society,   151--155 (2006)
%
\bibitem{14} Eberly, W.. Efficient parallel independent subsets and matrix factorization.
    In Proceedings, 3rd IEEE Symposium on Parallel and Distributed Processing (Dallas, USA, 1991), 204--211 (1991)
\bibitem{15} Kaltofen, E., Pan, V. Processor-efficient parallel solution of linear systems over an abstract field. In Proceedings, 3rd Annual ACM Symposium
    on Parallel Algorithms and Architectures, ACM Press, 180--191 (1991)
\bibitem{16} Kaltofen, E., Pan, V. Processor-efficient parallel solution of linear systems II: The general case. In Proceedings, 33rd IEEE Symposium on
    Foundations of Computer Science (Pittsburgh, USA, 1992) 714--723 (1992)
\bibitem{17} Kaltofen, E., Pan, V. Parallel solution of Toeplitz and Toeplitz-like linear systems over fields of small positive characteristic. In Proceedings,
    PASCO 94: First International Symposium on Parallel Symbolic Computation, World Scientific Publishing, 225-233 (1994)
\bibitem{8}
Malaschonok G.I.: Parallel Algorithms of Computer Algebra. Materials of the
conference dedicated for the 75 years of the Mathematical and Physical Dep. of
Tambov State University. (November 22-24, 2005). Tambov. TSU,  44-56 (2005)
\bibitem{9}
Malaschonok G.I. and Zuyev M.S.: Generalized algorithm for computing of inverse
matrix. 11-th conference "Derzhavinskie Chtenia". (February 2-6, 2006). Tambov. TSU,
 58-62 (2006)
\end{thebibliography}
\end{document}